\newcommand{\elabel}[1]{\label{eq:#1}}
\newcommand{\Pref}[1]{Proposition~\ref{prop:#1}}
\newcommand{\pref}[1]{Prop.~\ref{prop:#1}}
\newcommand{\flabel}[1]{\label{fig:#1}}
\newcommand{\fref}[1]{Fig.~\ref{fig:#1}}
\newcommand{\be}{\begin{equation}}
\newcommand{\ee}{\end{equation}}
\newcommand{\bea}{\begin{eqnarray*}}
\newcommand{\eea}{\end{eqnarray*}}
\newcommand{\bi}{\begin{itemize}}
\newcommand{\ei}{\end{itemize}}
\newtheorem{proposition}{Proposition}
\newenvironment{proof}[1][Proof]{\noindent\textbf{#1.} }{\ \rule{0.5em}{0.5em}}
\numberwithin{equation}{section}
\begin{document}
\title{Intergenerational mobility measures in a bivariate normal model}
\author{Yonatan Berman}
\email{yonatanb@post.tau.ac.il}
\affiliation{School of Physics and Astronomy, Tel-Aviv University, Tel-Aviv, Israel}

\date{\today}

\begin{abstract} 
We model the joint log-income distribution of parents and children and derive analytic expressions for canonical relative and absolute intergenerational mobility measures. We find that both types of mobility measures can be expressed as a function of the other.
\end{abstract}

\maketitle

For the past several decades, many scholars have been studying economic intergenerational mobility~\cite{mazumder2005fortunate,aaronson2008intergenerational,lee2009trends,hauser2010intergenerational,corak2013income,chetty2014united}. The motivation for studying mobility stems from its relationship to concepts like equality of opportunity~\cite{roemer2000opportunity,chetty2014land}, the so-called ``American Dream''~\cite{corak2009chasing,chetty2017fading} and income inequality~\cite{corak2013income,berman2016understanding}. Typically measures of income intergenerational mobility are divided into two categories: relative -- quantifying the propensity of individuals to change their position in the income distribution, and absolute -- quantifying their propensity to change their income in money terms. The aim this note is to introduce a simple model for the joint income distribution of parents and children and use it for explicitly deriving canonical measures of relative and absolute mobility measures.

Our starting point is a population of $N$ parent-child pairs. We denote by $Y_p^i$ and $Y_c^i$ the incomes of the parent and the child (at the same age), respectively, for family $i=1\dots N$. We assume the incomes are all positive and move to define the log-incomes $X_p^i=\log Y_p^i$ and $X_c^i=\log Y_c^i$.

The canonical measure of relative mobility is the elasticity of child income with respect to parent income, known as the intergenerational earnings elasticity (IGE)~\cite{mulligan1997parental,lee2009trends,chetty2014land} and defined as the slope ($\beta$) of the linear regression

\be
X_c = \alpha + \beta X_p + \epsilon\,,
\ee
where $\alpha$ is the regression intercept and $\epsilon$ is the error term.

We note that IGE is a measure of immobility rather than of mobility and the larger it is, the stronger the relationship between the parent and child income. Therefore, $1-\beta$  can be used as a measure of mobility.

A standard approach to measure absolute intergenerational mobility, recently used in~\cite{chetty2017fading} for studying the trends in absolute mobility in the United States is to measure the fraction of children earning more than their parents, denoted by $A$:

\be
A = \frac{\sum_{j=1}^{N}{1_{\left\{i:Y_c^i>Y_p^i\right\}}\left(Y_c^j\right)}}{N}\,,
\ee
where $1_S\left(x\right)$ is the indicator function for a set $S$ and argument $x$ and $\left\{i:Y_c^i>Y_p^i\right\}$ is the set of children earning more than their parents.

Since the logarithmic function preserves order we also get,

\be
A = \frac{\sum_{j=1}^{N}{1_{\left\{i:X_c^i>X_p^i\right\}}\left(X_c^j\right)}}{N}\,.
\ee

One hypothetical sample of such distribution is presented in~\fref{lines}. It also depicts graphically how $A$ and $\beta$ are defined. The blue line is $y=x$, hence the rate of  absolute mobility is defined as the fraction of parent-child pairs which are above it. The red line is the linear regression $y=\alpha +\beta x$, for which $\beta$ is the IGE.

\begin{figure}[!htb]
\centering
\includegraphics[width=0.5\textwidth]{./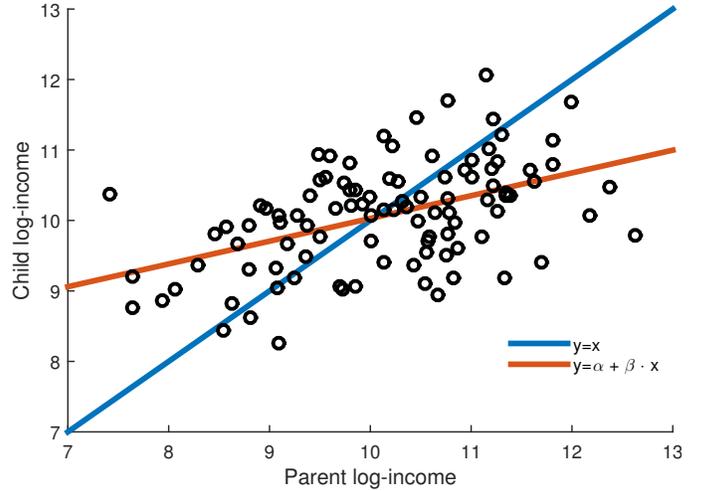}
\caption{An illustration of the absolute and relative mobility measures. The black circles are a randomly chosen sample of 100 parent-child log-income pairs. The sample was created assuming a bivariate normal distribution and the parameters used were $\mu_p=10.1$, $\sigma_p=0.78$ (for the parents marginal distribution) and $\mu_c=10.25$, $\sigma_c=1.15$ (for the children marginal distribution) with correlation of $\rho=0.57$. The resulting $\alpha$ and $\beta$ were $1.8$ and $0.84$, respectively.}
\flabel{lines}
\end{figure}

Since income distributions are known to be well approximated by the log-normal distribution~\cite{pinkovskiy2009parametric}, a simple plausible model for the joint distribution of parent and child log-incomes is the bivariate normal distribution. Under this assumption, the marginal income distributions of both parents and children are log-normal and the correlation between their log-incomes is defined by a single parameter $\rho$. The marginal log-income distribution of the parents (children) follows $\mathcal{N}\left(\mu_p,\sigma_p^2\right)$ ($\mathcal{N}\left(\mu_c,\sigma_c^2\right)$), hence the joint distribution is fully characterized by 5 parameters: $\mu_p$, $\sigma_p$, $\mu_c$, $\sigma_c$ and $\rho$.

Assuming the bivariate normal approximation for the joint distribution enables theoretically studying its properties. In particular Both measures of mobility, $A$ and $1-\beta$, can be derived directly from the model and, notably, can both be expressed analytically as functions of the other. We first derive the IGE in terms of the distribution parameters:

\begin{proposition}
\label{prop:prop1}

For a bivariate normal distribution with parameters $\mu_p$, $\sigma_p$ (for the parents marginal distribution) and $\mu_c$, $\sigma_c$ (for the children marginal distribution) assuming correlation $\rho$, the IGE is

\be
1-\beta = 1-\frac{\sigma_c}{\sigma_p}\rho \,.
\elabel{beta_rho}
\ee
\end{proposition}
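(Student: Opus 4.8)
The plan is to compute the ordinary least squares slope $\beta$ directly from its definition as the minimizer of the expected squared error, and to recognize that for a bivariate normal this reduces to a standard covariance-to-variance ratio. The regression $X_c = \alpha + \beta X_p + \epsilon$ is fitted so that $\beta = \mathrm{Cov}(X_p,X_c)/\mathrm{Var}(X_p)$; this is the population analogue of the usual least-squares formula and holds for any pair of jointly distributed variables with finite second moments, so I would invoke it without rederiving the normal equations.

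The key step is then to substitute the parametrization of the bivariate normal. By definition of the model, $\mathrm{Var}(X_p)=\sigma_p^2$, $\mathrm{Var}(X_c)=\sigma_c^2$, and the correlation coefficient satisfies $\rho = \mathrm{Cov}(X_p,X_c)/(\sigma_p\sigma_c)$, so that $\mathrm{Cov}(X_p,X_c)=\rho\,\sigma_p\sigma_c$. Plugging these in gives
\be
\beta = \frac{\mathrm{Cov}(X_p,X_c)}{\mathrm{Var}(X_p)} = \frac{\rho\,\sigma_p\sigma_c}{\sigma_p^2} = \frac{\sigma_c}{\sigma_p}\rho\,,
\ee
and subtracting from $1$ yields the claimed expression \eref{beta_rho} for the mobility measure $1-\beta$.

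I do not expect a genuine obstacle here, since the result is essentially a restatement of a textbook identity for the regression coefficient. The only point deserving care is justifying that $\mathrm{Cov}(X_p,X_c)/\mathrm{Var}(X_p)$ really is the population OLS slope, \ie that this ratio minimizes $\mathbb{E}\bigl[(X_c-\alpha-\beta X_p)^2\bigr]$ over $\alpha,\beta$; this follows by differentiating the quadratic objective in $\alpha$ and $\beta$ and solving the resulting normal equations, with the bivariate normality guaranteeing the finiteness of all second moments so that the expectation is well-defined. The normality assumption does no further work in this particular proposition beyond ensuring the moments exist and identifying $\rho$ with the correlation parameter; it will presumably carry more weight in the companion derivation of the absolute-mobility measure $A$.
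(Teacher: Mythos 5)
Your proof is correct and follows essentially the same route as the paper: both rest on the identity $\beta = \mathrm{Cov}(X_p,X_c)/\mathrm{Var}(X_p)$ combined with $\mathrm{Cov}(X_p,X_c)=\rho\,\sigma_p\sigma_c$. The only cosmetic difference is that you justify the slope formula at the population level by minimizing the expected squared error, whereas the paper quotes the sample OLS slope and passes to its population analogue (and in fact your version is the tidier one, since the paper's displayed sample formula omits the square in its denominator).
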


\begin{proof}
First, by definition, the correlation $\rho$, between $X_p$ and $X_c$ equals to their covariance, divided by $\sigma_p\sigma_c$

\be
\rho = \frac{\text{Cov}\left[X_p,X_c\right]}{\sigma_p\sigma_c}\,.
\ee

$\beta$ can be directly calculated as follows, by the linear regression slope definition:

\be
\beta = \frac{\sum_{i=1}^{N} {\left(X_p^i - \bar{X}_p\right)\left(X_c^i - \bar{X}_c\right)}}{\sum_{i=1}^{N} {\left(X_p^i - \bar{X}_p\right)}}\,,
\ee
where $\bar{X}_p$ and $\bar{X}_c$ are the average parents and children log-incomes, respectively.

It follows that 
\be
\beta = \frac{\text{Cov}\left[X_p,X_c\right]}{\sigma_p^2}\,.
\ee

We immediately obtain

\be
\beta = \frac{\sigma_c}{\sigma_p}\rho
\ee

and therefore

\be
1-\beta = 1-\frac{\sigma_c}{\sigma_p}\rho
\ee
\end{proof}

Following \pref{prop1} it is also possible to derive the rate of absolute mobility as a function of the distribution parameters and the IGE:

\begin{proposition}
\label{prop:prop2}

For a bivariate normal distribution with parameters $\mu_p$, $\sigma_p$ (for the parents marginal distribution), $\mu_c$, $\sigma_c$ (for the children marginal distribution) and $\rho=\sigma_p\beta/\sigma_c$ (where $\beta$ is the IGE), the rate of absolute mobility is

\be
A = \Phi\left(\frac{\mu_c - \mu_p}{\sqrt{\sigma_p^2\left(1 - 2\beta\right) + \sigma_c^2}}\right) \,,
\elabel{abs2}
\ee
where $\Phi$ is the cumulative distribution function of the standard normal distribution.
\end{proposition}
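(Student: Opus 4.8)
The plan is to identify the rate of absolute mobility with a probability: in the bivariate normal model the empirical fraction $A$ is the probability that a single family satisfies $X_c > X_p$, so $A = \Pr(X_c - X_p > 0)$. The entire problem then reduces to finding the distribution of the difference $D \equiv X_c - X_p$ and evaluating a single tail probability.

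The key step is the standard fact that every linear combination of the components of a jointly normal vector is itself univariate normal. Since $(X_p, X_c)$ is bivariate normal, $D$ is normal and is therefore determined by its mean and variance alone. By linearity, $\mathrm{E}[D] = \mu_c - \mu_p$, and $\mathrm{Var}[D] = \sigma_c^2 + \sigma_p^2 - 2\,\mathrm{Cov}[X_p,X_c] = \sigma_c^2 + \sigma_p^2 - 2\rho\sigma_p\sigma_c$, where I use $\mathrm{Cov}[X_p,X_c] = \rho\sigma_p\sigma_c$ as recalled in the proof of \pref{prop1}.

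I would then standardize. Writing $Z$ for a standard normal variable, $A = \Pr(D>0) = \Pr\!\left(Z > -\mathrm{E}[D]/\sqrt{\mathrm{Var}[D]}\right)$, and the symmetry identity $\Pr(Z > -t) = \Phi(t)$ gives $A = \Phi\!\left((\mu_c - \mu_p)/\sqrt{\sigma_c^2 + \sigma_p^2 - 2\rho\sigma_p\sigma_c}\right)$. Finally I substitute the relation $\rho = \sigma_p\beta/\sigma_c$ from \pref{prop1}: this turns the cross term into $2\rho\sigma_p\sigma_c = 2\beta\sigma_p^2$, so the radicand collapses to $\sigma_p^2(1-2\beta) + \sigma_c^2$, which is exactly \eref{abs2}.

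I do not expect a substantive obstacle. The only points that require care are (i) invoking normality of $D$, which is legitimate precisely because $(X_p,X_c)$ is assumed \emph{jointly} normal rather than merely marginally normal, and (ii) the sign bookkeeping in the standardization, so that the symmetry of $\Phi$ produces $\Phi(\,\cdot\,)$ rather than $1-\Phi(\,\cdot\,)$.
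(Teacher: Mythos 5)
Your proposal is correct and follows essentially the same route as the paper's proof: define the difference $Z = X_c - X_p$, use joint normality to conclude $Z \sim \mathcal{N}\left(\mu_c - \mu_p,\, \sigma_p^2 + \sigma_c^2 - 2\rho\sigma_p\sigma_c\right)$, substitute $\rho\sigma_p\sigma_c = \beta\sigma_p^2$ via \pref{prop1}, and standardize using the symmetry of $\Phi$. Your explicit remark that normality of the difference requires \emph{joint} normality rather than merely normal marginals is a point the paper leaves implicit, but the argument is the same.
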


\begin{proof}
We start by defining a new random variable $Z = X_c-X_p$. It follows that calculating $A$ is equivalent to calculating the probability $P\left(Z>0\right)$.

Subtracting two dependent normal distributions yields that $Z \sim \mathcal{N}\left(\mu_c - \mu_p,\sigma_p^2 + \sigma_c^2 - 2\text{Cov}\left[X_p,X_c\right]\right)$, so according to \pref{prop1}

\be
Z \sim \mathcal{N}\left(\mu_c - \mu_p,\sigma_p^2\left(1-2\beta\right) + \sigma_c^2\right)\,.
\ee

If follows that

\be
\frac{Z - \left(\mu_c - \mu_p\right)}{\sqrt{\sigma_p^2\left(1-2\beta\right) + \sigma_c^2}} \sim \mathcal{N}\left(0,1\right)\,,
\ee

so we can now write

\be
\begin{split}
&P\left(Z>0\right) = \\ & P\left(\frac{Z - (\mu_c - \mu_p)}{\sqrt{\sigma_p^2\left(1-2\beta\right) + \sigma_c^2}} > -\frac{\mu_c - \mu_p}{\sqrt{\sigma_p^2\left(1-2\beta\right) + \sigma_c^2}} \right) = \\ &\Phi\left(\frac{\mu_c - \mu_p}{\sqrt{\sigma_p^2\left(1 - 2\beta\right) + \sigma_c^2}}\right) \,,
\end{split}
\ee
where $\Phi$ is the cumulative distribution function of the standard normal distribution.

\end{proof}

\Pref{prop2} shows that the rate of absolute mobility can be explicitly described as a function of the relative mobility.

\bibliographystyle{plain}
\bibliography{mobmob}

\begin{thebibliography}{10}

\bibitem{aaronson2008intergenerational}
Daniel Aaronson and Bhashkar Mazumder.
\newblock Intergenerational economic mobility in the united states, 1940 to
  2000.
\newblock {\em Journal of Human Resources}, 43(1):139--172, 2008.

\bibitem{berman2016understanding}
Yonatan Berman.
\newblock Understanding the mechanical relationship between inequality and
  intergenerational mobility.
\newblock {\em SSRN}, 2016.
\newblock \url{http://ssrn.com/abstract=2796563}.

\bibitem{chetty2017fading}
Raj Chetty, David Grusky, Maximilian Hell, Nathaniel Hendren, Robert Manduca,
  and Jimmy Narang.
\newblock The fading american dream: Trends in absolute income mobility since
  1940.
\newblock {\em Science}, 356(6336):398--406, 2017.

\bibitem{chetty2014land}
Raj Chetty, Nathaniel Hendren, Patrick Kline, and Emmanuel Saez.
\newblock Where is the land of opportunity? the geography of intergenerational
  mobility in the united states.
\newblock {\em The Quarterly Journal of Economics}, 129(4):1553--1623, 2014.

\bibitem{chetty2014united}
Raj Chetty, Nathaniel Hendren, Patrick Kline, Emmanuel Saez, and Nicholas
  Turner.
\newblock Is the united states still a land of opportunity? recent trends in
  intergenerational mobility.
\newblock {\em The American Economic Review}, 104(5):141--147, 2014.

\bibitem{corak2009chasing}
Miles Corak.
\newblock Chasing the same dream, climbing different ladders: Economic mobility
  in the united states and canada.
\newblock 2009.
\newblock Economic Mobility Project, Pew Charitable Trusts Paper.

\bibitem{corak2013income}
Miles Corak.
\newblock Income inequality, equality of opportunity, and intergenerational
  mobility.
\newblock {\em The Journal of Economic Perspectives}, 27(3):79--102, 2013.

\bibitem{hauser2010intergenerational}
Robert~M. Hauser.
\newblock What do we know so far about multigenerational mobility?
\newblock Technical Report 98-12, UW-Madison Center for Demography and Ecology,
  2010.

\bibitem{lee2009trends}
Chul-In Lee and Gary Solon.
\newblock Trends in intergenerational income mobility.
\newblock {\em The Review of Economics and Statistics}, 91(4):766--772, 2009.

\bibitem{mazumder2005fortunate}
Bhashkar Mazumder.
\newblock Fortunate sons: New estimates of intergenerational mobility in the
  united states using social security earnings data.
\newblock {\em Review of Economics and Statistics}, 87(2):235--255, 2005.

\bibitem{mulligan1997parental}
Casey~B. Mulligan.
\newblock {\em Parental priorities and economic inequality}.
\newblock University of Chicago Press, 1997.

\bibitem{pinkovskiy2009parametric}
Maxim Pinkovskiy and Xavier Sala{-}\lowercase{i}{-}Martin.
\newblock Parametric estimations of the world distribution of income.
\newblock 2009.
\newblock {W}orking {P}aper 15433, NBER.

\bibitem{roemer2000opportunity}
John~E. Roemer.
\newblock {\em Equality of Opportunity}.
\newblock Harvard University Press, Cambridge, MA, 2000.

\end{thebibliography}

\end{document}